\theoremstyle{definition}
\theoremstyle{remark}
\def\BState{\State\hskip-\ALG@thistlm}
\newtheorem{theorem}{Theorem}
\def\therefore{
\leavevmode
\lower0.1ex\hbox{$\bullet$}
\kern-0.2em\raise0.7ex\hbox{$\bullet$}
\kern-0.2em\lower0.2ex\hbox{$\bullet$}
\thinspace}
\renewenvironment{proof}[1][\proofname]{\par
  \pushQED{\qed}%
  \normalfont \topsep6\p@\@plus6\p@\relax
  \trivlist
  \item[\hskip\labelsep
        \itshape
    #1\@addpunct{.}]\mbox{}\\*
}{%
  \popQED\endtrivlist\@endpefalse
}
\newcolumntype{L}[1]{>{\raggedright\let\newline\\\arraybackslash\hspace{0pt}}m{#1}}
\newcolumntype{C}[1]{>{\centering\let\newline\\\arraybackslash\hspace{0pt}}m{#1}}
\newcolumntype{R}[1]{>{\raggedleft\let\newline\\\arraybackslash\hspace{0pt}}m{#1}}
\renewenvironment{proof}[1][\proofname]{\par
  \pushQED{\qed}%
  \normalfont \topsep6\p@\@plus6\p@\relax
  \trivlist
  \item[\hskip\labelsep
        \itshape
    #1\@addpunct{.}]\mbox{}\\*
}{%
  \popQED\endtrivlist\@endpefalse
}
\begin{document}

\title{Privacy Preserving Count Statistics}

\author{Lu Yu, Oluwakemi Hambolu, Yu Fu, Jon Oakley, and Richard R. Brooks\\
the Holcombe Department of Electrical and Computer Engineering\\ 
Clemson University, Clemson, SC, 29630\\
lyu,ohambol,fu2,joakley,rrb@g.clemson.edu\\
}

\maketitle
\thispagestyle{empty}

\begin{abstract}
The ability to preserve user privacy and anonymity is important. One of the safest ways to maintain privacy is to avoid storing personally identifiable information (PII), which poses a challenge for maintaining useful user statistics.  Probabilistic counting has been used to find the cardinality of a multiset when precise counting is too resource intensive.  In this paper, probabilistic counting is used as an anonymization technique that provides a reliable estimate of the number of unique users. We extend previous work in probabilistic counting by considering its use for preserving user anonymity, developing application guidelines and including hash collisions in the estimate.  Our work complements previous method by attempting to explore the causes of the deviation of uncorrected estimate from the real value. The experimental results show that if the proper register size is used, collision compensation provides estimates are as good as, if not better than, the original probabilistic counting.  We develop a new anonymity metric to precisely quantify the degree of anonymity the algorithm provides. 
\end{abstract}

\section{Introduction}
Privacy preserving data mining, also known as statistical disclosure control, inference control or private data analysis~\cite{Dwork:2008:DPS:1791834.1791836, Barak:2007:PAC:1265530.1265569} seeks to protect statistical data so they can be publicly released and mined while preserving privacy~\cite{series-ads-Domingo-Ferrer08}.  Related techniques are widely used in domains like medical informatics or opinion polling, where patients feel uncomfortable about divulging information on issues like their frequency of drug use, or HIV status~\cite{Esponda:2008:EIN:2209130.2209172}. 

Storage of personally identifiable information (PII) like name, social security number (SSN), IP or MAC addresses can be easily used to compromise users' privacy.  Consider the recent Ashley Madison data disclosure~\cite{Ashley}.  So the best way to preserve user privacy is to not keep any PII.  Methods like this are effective in foiling attackers, but make the task of maintaining user statistics very challenging.  This paper shows how to maintain an accurate estimate of the number of unique users while protecting the privacy of each individual user by using statistical counting. 

The work presented here builds on \textit{probabilistic counting} introduced by Flajolet and Martin~\cite{Flajolet:Martin}, which estimates the number of distinct records/users (cardinality) without keeping the records.  It is based on statistical analysis of the bit positions of hashed records.  The position of the least significant bit set in each hashed record is stored in a register called a \textit{BITMAP}.  From the bit position of the lowest bit that is not set on the \textit{BITMAP}, we get an unbiased estimate of the cardinality of the user set.  For example, the input records could be the users' SSL certificates. 

This paper makes three main contributions.  
\begin{itemize}
\item Probabilistic counting has been used for estimating the cardinality of a multiset when it is unrealistic (too resource-intensive to store or count the elements) to solve precisely~\cite{mcmullen2015probably}.  To the best of our knowledge, this is the first time the algorithm is used to preserve privacy and anonymity.  
\item We propose collision-included probabilistic counting (CIPC), which includes the hash collisions in the estimate. The Birthday paradox~\cite{Birthday} is used to determine the number of collisions.  The results of our experiments show that adding collisions to the uncorrected estimate of the number of users, gives a more accurate count than using a constant correction factor~\cite{Flajolet:Martin}. This also verifies that hash collision is a major cause of bias for probabilistic counting. 
\item We also provide an anonymity metric to measure the anonymity the proposed algorithm provides. 
\end{itemize}

This paper was inspired by problems we faced when maintaining user privacy while collecting usage statistics for a censorship circumvention tool~\cite{Hambolu:Thesis:2014}. The tool developed is used by dissidents and journalists, working in politically challenging regions in West Africa, to circumvent DNS and IP address blocking by leveraging technologies developed for use by criminal botnet enterprises~\cite{DBLP:conf/malware/2008}. 
To quantify the number of unique users of the system, a secure method was needed to keep track of the statistics.  To the best of our knowledge, this is the first time probabilistic counting is applied to safeguarding user privacy .  
 
The rest of the paper is organized as follows.  In section~\ref{sec:background} we present background on probabilistic counting and the hash function used in the proposed algorithm.  We present the collision-included probabilistic counting (CIPC) algorithm in section~\ref{sec:cipc}.  Experimental results are given along with the recommendation for selecting the proper register/\textit{BITMAP} size.  We conclude the paper and point out future work for this research in section~\ref{sec:conclusion}.

\section{Background}\label{sec:background}
This section covers previous probabilistic counting work.

Probabilistic counting~\cite{Flajolet:Martin} estimates the number of distinct elements (cardinality) in a large collection of data that contain duplicates. It requires only a small amount of storage and few operations per element.  The algorithm works as follows:
\begin{description}
\item[Inputs: ] A multiset of records $\chi'$; a register/\textit{BITMAP} $b$ of  length $L$ initialized to $zero$ (see Figure~\ref{Fig:bitmap}).
\item[Output: ] The estimate of the number of distinct records, denoted by $M$.
\item[Step 1] A hash function $h(x)$ that maps each element of $\chi'$ to an integer. This set of integers is uniformly distributed over the range $[0,1,\cdots, 2^{L}-1]$. 
\item[Step 2] For each record $x\in \chi'$, let $\text{LSSB}(h(x))$ be the position of the least significant bit that is set in the binary representation of $h(x)$,  $0\leq \text{LSSB}(h(x))\leq L-1$.  Set the corresponding \textit{BITMAP} position to $1$.  That is, if $\text{LSSB}(h(x))=i$, then $b[i]=1$.  For example, $b[4]$ will be set to $1$ if $h(x) = \cdots011010000$.
\item[Step 3] Let $k$ be the position of the rightmost zero in \textit{BITMAP} $b$.  The estimated value of the number of unique records in $\chi'$ is given by 
\begin{equation}\label{eq:varphi}
\tilde{M}_{PC} = \frac{1}{\varphi}\cdot 2^k 
\end{equation}
where $\varphi=0.77351\cdots$ is the correction factor.
\end{description}

The algorithm is based on statistical observations concerning the bits of hashed values.  Let $\chi$ denote the set of distinct records in $\chi'$.  Since $h(x)$ is uniformly distributed over $[0,1,\cdots,2^{L}-1]$, for each$x\in \chi$, $h(x) =\cdots1$ with probability $1/2$, $h(x) =\cdots10$ with probability $1/4$, $h(x)=\cdots 100$ with probability $1/8$ and so on.  This can be generalized as the sequence  $\cdots1\! \underbrace{0\cdots0\,}_\text{$k-1$}$ occurring with probability of ${1}/{2^{k}}$~\cite{Flajolet:Martin}.  So if we randomly select $x\in \chi$, the probability of $LSSB(x) = k$ being set to $1$ is ${1}/{2^{k}}$.  The \textit{BITMAP} only depends on the least significant set bits (LSSBs) of distinct hashed values and not on the frequency of the values.  If a record occurs more than once in multiset $\chi'$, it will be counted only once. 

\begin{figure}
\begin{center}
\includegraphics[width=.4\textwidth]{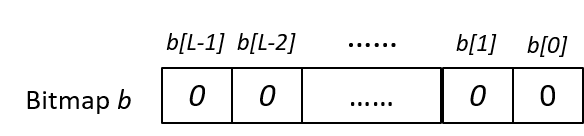}
\caption{Bitmap $b$ initialized to zero.}
\label{Fig:bitmap}
\end{center}
\end{figure} 

Let $M$ be the number of distinct records in $\chi'$, i.e., $M=|\chi|$, it is expected that $b[0]$ will be set about $M/2$ times, $b[1]$ will be set approximately $M/4$ times and so on.  At the end of the execution, it is very likely that $b[i]=0$ for $i \gg log_2M$ and $b[i]=1$ for $i \ll log_2M$ ~\cite{Flajolet:Martin}.  An example is illustrated in Figure~\ref{Fig:probmapping}.  Assuming there are $M =20000$ distinct records, about $10000$ records will have their LSSB be bit $0$, that is, about $M/2$ of the hash values are mapped to $b[0]$; about $5000$ records are mapped to bit position $1$, i.e., approximately $M/4$ records are mapped to $b[1]$, and so on~\cite{Gama:2010:KDD:1855075}.

\begin{figure}
\begin{center}
\includegraphics[width=.5\textwidth]{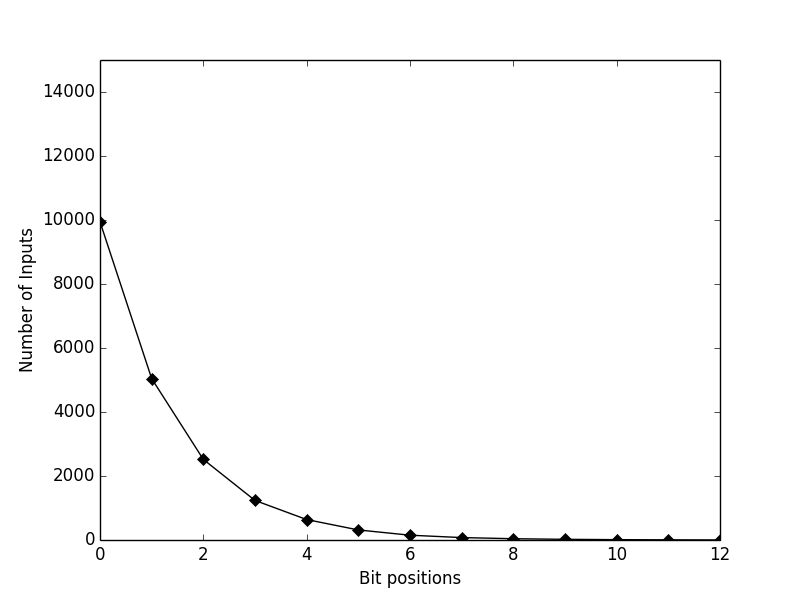}
\caption{The mapping of $20000$ distinct records to their bit position in the \textit{BITMAP}}
\label{Fig:probmapping}
\end{center}
\end{figure}

It was proposed in~\cite{Flajolet:Martin} to use the position of the rightmost zero in \textit{BITMAP} as an indicator of $\log_2 M$, based on the assumption that hash values will be uniformly distributed over $\left[0,1,\cdots ,2^L - 1\right]$.  For example, if the \textit{BITMAP} is $[\cdots1010 \underbrace{111111}_\text{$k-1=6$}]$, the cardinality $M$, is estimated by
\begin{equation*}
M\approx 2^k=2^6
\end{equation*}
It is shown in~\cite{Flajolet:Martin, Finch}, $2^k$ is a biased estimate of $M$ and a \textit{correction factor} $\varphi\approx 0.77351$ can be used as a simple correction.  The unbiased estimate of cardinality $M$ is given by
\begin{equation*}
\tilde{M}_{PC} = \frac{1}{\varphi}\cdot 2^k 
\end{equation*}

\section{Collision-Included Probabilistic Counting (CIPC)}\label{sec:cipc}
The plot of $\tilde{M}=2^k$ is in Figure~\ref{Fig:leveloff}, where $k$ is the position of the rightmost zero in the \textit{BITMAP}.  It is not hard to see that $2^k$ gives an underestimate of $M$ for the most part.  And the gap between $2^k$ and $M$ gets larger as $M$ increases.  According to~\eqref{eq:varphi}, probabilistic counting uses a correction factor $\varphi=0.77351\cdots$ to compensate for the deviation of using $2^k$ as an indicator of $M$.  Although rigorous calculation of $\varphi$ is provided in~\cite{Flajolet:Martin}, the cause of the deviation is not discussed.  In this section, we 
consider the effects of collisions in hashing and how including the collisions in hashing will improve the estimate.  The experimentation results show that our approach produces an estimate at least as good as using $\varphi$, or even better under certain circumstances.  Moreover, our approach provides some insight into the possible cause of the estimation bias.

\subsection{Collision Estimate using Birthday Paradox}
Even with uniform hash functions, it is inevitable that more than one record will be mapped to the same hash value, which is known as a \textit{collision}.  Although the \textit{BITMAP} is set to be big enough to hold all the records (i.e., $2^L>|\chi|$), the probability of collisions increases with increasing number of records, especially when $L$ is fixed.  Therefore, we propose collision-included probabilistic counting (CIPC).  We calculate the expected number of collisions and add it to $2^k$ to improve the results.  Birthday paradox~\cite{Wagner02ageneralized} is adopted to estimate the expected number of collisions of multiplication hashing.

\begin{figure}
\begin{center}
\includegraphics[width=.5\textwidth]{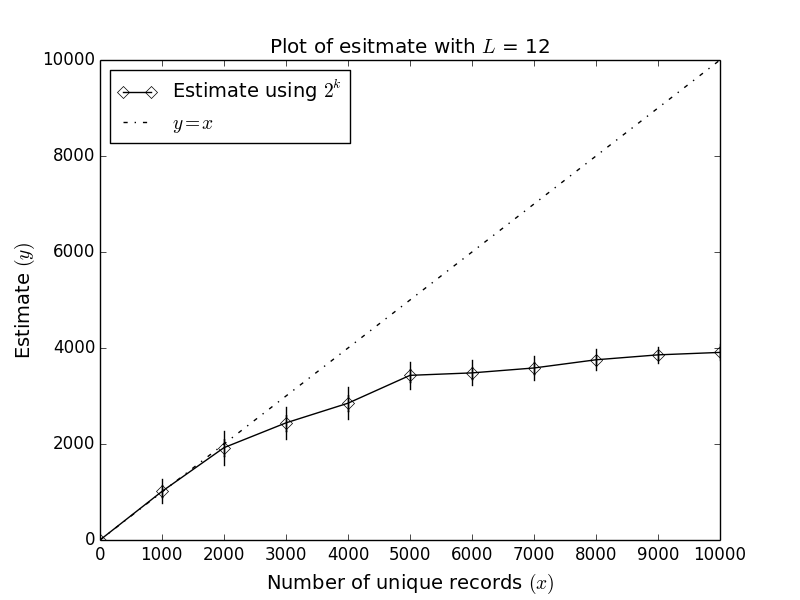}
\caption{Probabilistic Counting Estimate Deviation.}
\label{Fig:leveloff}
\end{center}
\end{figure}

\begin{theorem}
If the position of the rightmost zero in \textit{BITMAP} (ranks start at $0$) is used as an indicator of $\log_2 M$, under the assumption that the hash values are uniformly distributed, the expected value of $M$ including collisions is:
\begin{equation}\label{eq:cipc}
\tilde{M}_{CIPC}=\Big\lfloor\log_{1-\frac{1}{2^L}} \Big(1-2^{k-L}\Big)\Big\rfloor
\end{equation}
where $\tilde{M}_{CIPC}$ is the estimate of $M$ obtained using collision-included probabilistic counting (CIPC).
\end{theorem}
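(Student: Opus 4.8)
The plan is to reinterpret the uncorrected Flajolet--Martin quantity $2^k$ as an estimate of the number of \emph{distinct hash values} actually produced, rather than as an estimate of $M$ itself. The BITMAP is blind to multiplicity: when several of the $M$ records land on the same hash value, the BITMAP is populated exactly as it would be by the smaller number $N$ of distinct hash values actually hit. So, under the stated assumption that the rightmost zero $k$ is read as an indicator of $\log_2$ of the quantity the BITMAP actually ``sees,'' we equate $2^k$ with $N$, and then ask: how many records $M$ must have been hashed, in expectation, to occupy $N=2^k$ of the $2^L$ slots?

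First I would compute the expected number of occupied slots by a balls-into-bins (birthday-paradox) argument. Think of the $2^L$ possible hash outputs as bins. Since each of the $M$ distinct records is mapped uniformly and independently, a fixed bin receives none of them with probability $(1-2^{-L})^{M}$, hence is occupied with probability $1-(1-2^{-L})^{M}$. By linearity of expectation over the $2^L$ bins,
\begin{equation*}
\mathbb{E}[N] \;=\; 2^{L}\Big(1-\big(1-\tfrac{1}{2^{L}}\big)^{M}\Big).
\end{equation*}

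Next I would substitute $2^k$ for $\mathbb{E}[N]$ and invert. Dividing by $2^L$ gives $2^{k-L}=1-(1-2^{-L})^{M}$, hence $(1-2^{-L})^{M}=1-2^{k-L}$; taking the logarithm to base $1-\tfrac{1}{2^{L}}$ yields $M=\log_{1-\frac{1}{2^{L}}}\!\big(1-2^{k-L}\big)$. Because the BITMAP is not full ($2^{L}>|\chi|$, so $k\le L-1$) the argument $1-2^{k-L}$ lies in $(0,1)$ and the right-hand side is positive; rounding down to the nearest integer (since $M$ is a count) gives exactly \eqref{eq:cipc}.

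The only nontrivial step --- and the main thing the write-up has to sell --- is the identification $2^k\approx N$: one must argue that, as far as the BITMAP is concerned, $M$ colliding records are indistinguishable from $N$ non-colliding ones, so that the classical ``$2^k$ indicates $\log_2(\text{number of distinct values seen})$'' heuristic should be applied to $N$ and then un-collided to recover $M$. Everything after that is the routine occupancy computation and an algebraic inversion. Two points are worth flagging alongside the statement: the argument lives at the level of expectations (it inherits the intrinsic variance of $k$ and does not reduce it), and it absorbs the original Flajolet--Martin bias correction $\varphi$ into the collision correction rather than applying both.
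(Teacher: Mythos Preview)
Your proposal is correct and is essentially the paper's own argument: the paper writes the expected number of collisions as $C=M-2^{L}+2^{L}\big(1-2^{-L}\big)^{M}$, sets $\tilde{M}_{CIPC}\approx 2^{k}+C$ with $\tilde{M}_{CIPC}=M$, and solves---which is algebraically identical to your identification $2^{k}=\mathbb{E}[N]=2^{L}\big(1-(1-2^{-L})^{M}\big)$ followed by inversion, since $C=M-\mathbb{E}[N]$. Your framing (``$2^{k}$ estimates the number of occupied hash slots, then un-collide'') is a slightly cleaner way to say the same thing.
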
  

\begin{proof}
Since the $M$ hash values are uniformly distributed over $[0,1,\cdots,2^L-1]$, according to birthday paradox~\cite{ProbHash}, the expected number of collisions is 
\begin{equation}\label{eq:collision}
C=M-2^L+2^L\Big(\frac{2^L-1}{2^L}\Big)^M
\end{equation}
where $C$ is the number of collisions.  Since $2^k$ is the estimate of $M$ without including the number of collisions, we have
\begin{equation}\label{eq:mcipc1}
\tilde{M}_{CIPC}\approx 2^k+C
\end{equation}
Substituting~\eqref{eq:collision} into~\eqref{eq:mcipc1} yields
\begin{equation}\label{eq:mcipc2}
\tilde{M}_{CIPC}\approx 2^k+ M -2^L+2^L\Big(\frac{2^L-1}{2^L}\Big)^M	
\end{equation}
Given that $\tilde{M}_{CIPC} = M$, solving~\eqref{eq:mcipc2} for $\tilde{M}_{CIPC}$ gives us
\begin{equation}
\tilde{M}_{CIPC}=\Big\lfloor\log_{1-\frac{1}{2^L}} \Big(1-2^{k-L}\Big)\Big\rfloor
\end{equation}
\end{proof}

Figure~\ref{Fig:flowchart} shows a flowchart of CIPC used to estimate the number of distinct system users.

\begin{figure}[!]
\begin{center}
\includegraphics[width=.5\textwidth]{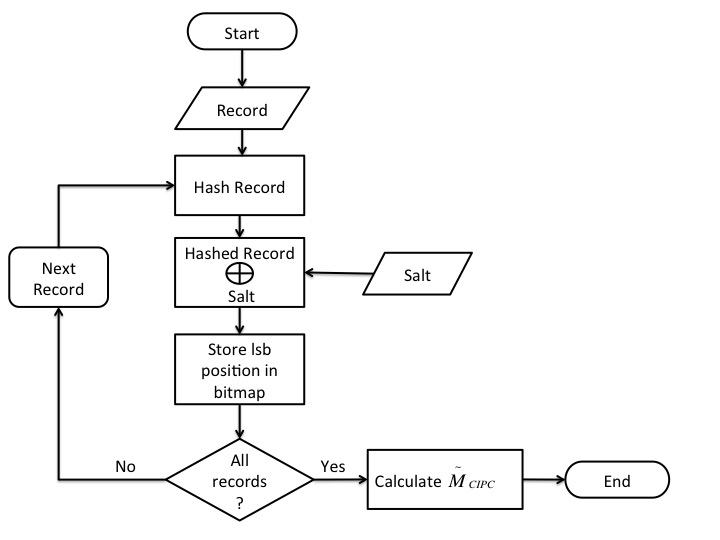}
\caption{Flow chart of the system.}
\label{Fig:flowchart}
\end{center}
\end{figure}

\begin{table*}[!]
\centering
\begin{tabular}{|L{1.2cm}|L{1.2cm}|L{2.2cm}|L{2.2cm}|L{1.2cm}|L{2.2cm}|L{2.2cm}|}
\hline
$M$ & $\tilde{M}_{PC}$  & $CI_{PC}$    & \textit{Percent Error} ($PE_{PC}$)  & $\tilde{M}_{CIPC}$ & $CI_{CIPC}$   & \textit{Percent Error} ($PE_{CIPC}$) \\ \hline

$1\times 10^3$    & 558   & (502,  614)     & 44.1506\% (min)         & 579    & (508, 649)     & 42.0687\% (min)           \\ \hline

$5\times 10^3$   & 2158  & (1856,  2460)   & 56.8314\%          & 2233   & (1862,  2603)   & 55.3373\%            \\ \hline

$1\times 10^4$   & 4236  & (3613,  4859)   & 57.6372\%          & 4348   & (3567,  5129)   & 56.514\%             \\ \hline

$2\times 10^4$   & 8869  & (7580, 10158)  & 55.6515\%          & 9236   & (7662,  10810)  & 53.816\%             \\ \hline

$3\times 10^4$    & 10590 & (10590,  10590) & 64.6977\%          & 11356  & (11356, 11356) & 62.1466\%            \\ \hline

$4\times 10^4$    & 18673 & (16211,  21134) & 53.3173\%          & 19649  & (16670,  22627) & 50.8772\%            \\ \hline

$5\times 10^4$   & 19769 & (17705,  21832) & 60.4614\%          & 20940  & (18351, 23529)  & 58.1189\%            \\ \hline

$6\times 10^4$   & 17651 & (11911,  23390) & 70.5814\% (max)        & 18283  & (11083,  25483) & 69.5277\% (max)           \\ \hline

$7\times 10^4$ & 32556 & (27636, 37476) & 53.4906\%          & 33476  & (27553,  39400) & 52.1758\%            \\ \hline

$8\times 10^4$   & 35673 & (30798,  40549) & 55.4076\%          & 37033  & (30917,  43150) & 53.7076\%            \\ \hline

$9\times 10^4$   & 34216 & (27734,  40697) & 61.98215\%         & 35205  & (27074,  43335) & 60.8833\%            \\ \hline

$1\times 10^5$  & 34125 & (28841,  39409) & 65.8744\%          & 35091  & (28462,  41719) & 64.9085\%            \\ \hline
\end{tabular}
\caption{Experimental result with $L= \lfloor log_2 M  \rfloor$}
\label{tbl:0}
\end{table*}

\begin{table*}[!]
\centering
\begin{tabular}{|L{1.2cm}|L{1.2cm}|L{2.2cm}|L{2.2cm}|L{1.2cm}|L{2.2cm}|L{2.2cm}|}
\hline
$M$ & $\tilde{M}_{PC}$  & $CI_{PC}$    & \textit{Percent Error} ($PE_{PC}$)  & $\tilde{M}_{CIPC}$ & $CI_{CIPC}$   & \textit{Percent Error} ($PE_{CIPC}$) \\ \hline

$1\times 10^3$    & 884    & (769, 1000)     & 11.5191\%          & 886    & (749, 1024)     & 11.3326\%            \\ \hline

$5\times 10^3$     & 3530   & (3097, 3962)    & 29.3954\%          & 3498   & (2971,4024)    & 30.033\%             \\ \hline

$1\times 10^4$    & 6586   & (5573, 7598)    & 34.1391\%         & 6504   & (5306,7702)    & 34.954\%             \\ \hline

$2\times 10^4$    & 14872  & (13048, 16695)  & 25.6398\%          & 15000  & (12802, 17197)  & 24.9991\%           \\ \hline

$3\times 10^4$    & 16878  & (14796,18961)  & 43.73699\%          & 17414  & (14873,19954)  & 41.9531\%            \\ \hline

$4\times 10^4$   & 28536  & (24627,32444)  & 28.6599\%          & 28344  & (23577, 33111)  & 29.1381\%           \\ \hline

$5\times 10^4$     & 31772  & (27465, 36078)  & 36.4558\%         & 32248  & (26922, 37574)  & 35.5024\%            \\ \hline

$6\times 10^4$   & 31772  & (26664,  36879)  & 47.0465\% (max)       & 32404  & (26193, 38615)  & 45.9927\% (max)           \\ \hline

$7\times 10^4$   & 89838  & (74745, 104931) & 28.3403\%          & 88414  & (71114, 105714) & 26.3064\%            \\ \hline

$8\times 10^4$    & 94758  & (80114, 109403) & 18.4484\%          & 93213  & (76323, 110103) & 16.5168\%           \\ \hline

$9\times 10^4$    & 100233 & (85382, 115083) & 11.3702\% (min)         & 99416  & (82157, 116675) & 10.4628\% (min) \\ \hline

$1\times 10^5$  & 68839  & (61495, 76183)  & 31.1605\%          & 71098  & (62014, 80182)  & 28.9014\%            \\ \hline
\end{tabular}
\caption{Experimental result with $L = \lfloor log_2 M  \rfloor + 1$}
\label{tbl:1}
\end{table*}

\begin{table*}[!]
\centering
\begin{tabular}{|L{1.2cm}|L{1.2cm}|L{2.2cm}|L{2.2cm}|L{1.2cm}|L{2.2cm}|L{2.2cm}|}
\hline
$M$ & $\tilde{M}_{PC}$  & $CI_{PC}$    & \textit{Percent Error} ($PE_{PC}$)  & $\tilde{M}_{CIPC}$ & $CI_{CIPC}$   & \textit{Percent Error} ($PE_{CIPC}$) \\ \hline
$1\times 10^3$   & 1109   & (919, 1299) & 10.9853\%          & 1034   &(822, 1245)   & 3.4\%                \\ \hline
$5\times 10^3$     & 5395   & (4446, 6344)  & 7.905\%  & 5248   & (4168, 6329)  & 4.9732\%             \\ \hline
$1\times 10^4$   & 10590  & (8972, 12208) &5.9068\%      & 10082 & (8227, 11936) & 0.8203\%            \\ \hline
$2\times 10^4$    & 22140  & (18757, 25522) &10.7\%       &  21374 & (17511, 25238) &6.8742\%         \\ \hline
$3\times 10^4$   & 28241  & (25109, 31374) & 14.2658\%           & 24377  &(21443, 27312) & 15.7221\% (max)           \\ \hline

$4\times 10^4$    & 44833  & (37089,  52578)   & 12.0847\%          & 38585  & (31495,  45676)  & 3.53508\%     \\ \hline

$5\times 10^4$    & 54718  & (46912,  62524)   & 9.437\%    & 47616  & (40421, 54812)  & 4.7661\%             \\ \hline

$6\times 10^4$   & 56898  & (49628, 64169)   &5.1683\% (min)        & 57107  & (48475,  65739)  & 4.8209\%          \\ \hline

$7\times 10^4$     & 77665  & (64776,  90553)   & 10.95\%          & 65704  & (53884,  77524)  & 6.1366\%             \\ \hline
$8\times 10^4$   & 92844  & (78001,  107688)  & 16.0562\% (max)          & 79811  & (66157,  93464)  & 0.2361\% (min) \\ \hline
$9\times 10^4$     & 101317 & (86236,  116398)  & 12.575\%          & 87580  & (73666,  101495) & 2.68812\%            \\ \hline
$1\times 10^5$   & 110613& (95856,  125371) & 10.6138\%        & 111035 & (93671,  128399) & 11.0354\%      \\ \hline
\end{tabular}
\caption{Experimental result with $L = \lfloor log_2 M  \rfloor + 2$}
\label{tbl:2}
\end{table*}

\subsection{Experiment Results}
To evaluate the proposed algorithm, a set of experiments are performed for different values of $M$ with different register sizes ,$L$, using both probabilistic counting and CIPC.  The same multiplication hash function was used in both algorithms.  

In each experiment, we randomly picked $M$ distinct integers that range between $1$ and $2\times 10^5$ and used them as the records.  For each value of $M$, the experiment was repeated for $50$ times using each algorithm.  We then took the average as the final estimate of $M$.

The results of the experiment are given in Table~\ref{tbl:0} and Table~\ref{tbl:1}.  Each table corresponds to a different register size ($L$).  Columns named ``$CI_{\ast}$'' contain 95\% confidence interval (CI) of the corresponding estimate.  For easy and direct comparison of the results, we also calculated the \textit{Percent Error} (PE) of each estimate using formula
\begin{equation}
\textit{Percent Error}=\Bigg|\frac{\text{Estimate}-\text{True value}}{\text{True value}}\Bigg|\times 100\%
\end{equation}
The lower the percent error (PE) is, the closer the estimate is to the true value.  Based on this rule, we compare the performance of CIPC with probabilistic counting in terms of estimation accuracy.
\begin{itemize}[leftmargin=*]
\item $L= \lfloor log_2 M \rfloor$: As shown in Table~\ref{tbl:0}, $PE_{CIPC}$ is smaller than $PE_{PC}$ for all tested $M$.  This means CIPC always produces a more accurate estimate than probabilistic counting in our experiments.  But the advantage is not significant (mostly 1-3\% lower in PE rate). 
\item $L= \lfloor log_2 M \rfloor+1$: As shown in Table~\ref{tbl:1}, CIPC gives about the same results as probabilistic counting.  No significant difference is observed among the estimates generated using the two algorithms.  But it is worth mentioning that both maximum $PE_{CIPC}$ and minimum $PE_{CIPC}$ are smaller than the corresponding parameters of probabilistic counting.
\item $L= \lfloor log_2 M \rfloor+2$: As shown in Table~\ref{tbl:2}, CIPC provides significantly superior estimate accuracy to probabilistic counting except for $M=3\times 10^4$ and $1\times 10^5$.  Half of the $PE_{PC}$ are several times bigger than the corresponding $PE_{CIPC}$.
\end{itemize}
Based on the results of our experiments, it is safe to say that CIPC outperforms probabilistic counting in general.


\section{Conclusion}\label{sec:conclusion}
To accurately estimate the number users of an anonymous system, we adopted probabilistic counting algorithm to develop collision included probabilistic counting (CIPC), which does not store identifiable information of system users. CIPC includes the hash collisions to the estimate which gives a more accurate estimate than the original probabilistic counting. Based on simulation results, we recommend using $L \approx \lfloor log_2\tilde{M} \rfloor + 2$ as the register size to maximize the results.

For the future work, we will explore the feasibility of probabilistic counting as an anonymity tool.  We will investigate the information leakage of the CIPC, which implies how much uncertainty the system gives and probability of a user being identified by an attacker.

\section{Acknowledgement}
This material is based upon work sponsored by the National
Science Foundation under Grants Nos. 1547164, 1544910
and 1643020. Any opinions, findings, and conclusions or
recommendations expressed in this material are those of the
authors and do not necessarily reflect the views of the National
Science Foundation.

\nocite{ex1,ex2}
\bibliographystyle{latex8}
\bibliography{latex8}

\begin{thebibliography}{10}\setlength{\itemsep}{-1ex}\small

\bibitem{Barak:2007:PAC:1265530.1265569}
B.~Barak, K.~Chaudhuri, C.~Dwork, S.~Kale, F.~McSherry, and K.~Talwar.
\newblock Privacy, accuracy, and consistency too: A holistic solution to
  contingency table release.
\newblock In {\em Proceedings of the Twenty-sixth ACM SIGMOD-SIGACT-SIGART
  Symposium on Principles of Database Systems}, PODS '07, pages 273--282, New
  York, NY, USA, 2007. ACM.

\bibitem{series-ads-Domingo-Ferrer08}
J.~Domingo-Ferrer.
\newblock A survey of inference control methods for privacy-preserving data
  mining.
\newblock In C.~C. Aggarwal and P.~S. Yu, editors, {\em Privacy-Preserving Data
  Mining}, volume~34 of {\em Advances in Database Systems}, pages 53--80.
  Springer, 2008.

\bibitem{Dwork:2008:DPS:1791834.1791836}
C.~Dwork.
\newblock Differential privacy: A survey of results.
\newblock In {\em Proceedings of the 5th International Conference on Theory and
  Applications of Models of Computation}, TAMC'08, pages 1--19, Berlin,
  Heidelberg, 2008. Springer-Verlag.

\bibitem{Esponda:2008:EIN:2209130.2209172}
F.~Esponda.
\newblock Everything that is not important: Negative databases [research
  frontier].
\newblock {\em Comp. Intell. Mag.}, 3(2):60--63, May 2008.

\bibitem{Finch}
S.~Finch.
\newblock {\em Mathematical Constants}.
\newblock Cambridge University Press, 2003.

\bibitem{Flajolet:Martin}
P.~Flajolet, G.~N. Martin, and G.~N. Martin.
\newblock Probabilistic counting algorithms for data base applications, 1985.

\bibitem{Gama:2010:KDD:1855075}
J.~Gama.
\newblock {\em Knowledge Discovery from Data Streams}.
\newblock Chapman \& Hall/CRC, 1st edition, 2010.

\bibitem{Ashley}
A.~Greenberg.
\newblock Was the ashley madison database leaked?
\newblock
  http://www.wired.com/2015/07/hack-brief-attackers-spill-user-data-cheating-site-ashley-madison/,
  2015.

\bibitem{Hambolu:Thesis:2014}
O.~Hambolu.
\newblock Privacy preserving statistics.
\newblock Master's thesis, Clemson University, South Carolina, USA, 2014.

\bibitem{ProbHash}
M.~Matt.
\newblock Counting hash collisions with the birthday paradox.
\newblock http://matt.might.net/articles/counting-hash-collisions/, July 2015.

\bibitem{mcmullen2015probably}
T.~Mcmullen.
\newblock It probably works.
\newblock {\em Queue}, 13(8):80, 2015.

\bibitem{DBLP:conf/malware/2008}
J.~Nazario and T.~Holz.
\newblock As the net churns: Fast-flux botnet observations.
\newblock In {\em 3rd International Conference on Malicious and Unwanted
  Software, {MALWARE} 2008, Alexandria, Virginia, USA, October 7-8, 2008\/}
  \cite{DBLP:conf/malware/2008}, pages 24--31.

\bibitem{Birthday}
P.~S.
\newblock Generalized birthday paradox.
\newblock
  http://gdtr.wordpress.com/2013/01/13/generalized-birthday-paradox-keygenme3-by-dcoder/.

\bibitem{Wagner02ageneralized}
D.~Wagner.
\newblock A generalized birthday problem.
\newblock In {\em In CRYPTO}, pages 288--303. Springer-Verlag, 2002.

\end{thebibliography}

\end{document}